\documentclass{article}

 \usepackage{amsmath}
\usepackage{amsfonts}
\usepackage{amssymb}
\usepackage{amsthm}
\usepackage[sort&compress,square,comma,numbers]{natbib}
\bibliographystyle{plainnat}

 \usepackage[english]{babel}

\usepackage{lineno,hyperref}
\modulolinenumbers[5]

\newtheorem{theorem}{Theorem}
\newtheorem{lemma}{Lemma}
\newtheorem{corollary}{Corollary}

%% `Elsevier LaTeX' style
\begin{document}
\begin{center}
\bigskip

\textbf{The Allen--Uzawa elasticity of substitution for nonhomogeneous production functions}

\bigskip

Elena Burmistrova, Sergey Lobanov

Department of Mathematics, Faculty of Economic Sciences,
National Research University Higher School of Economics, Moscow, Russia

\bigskip
\end{center}

\textbf{Abstract. }This note proves that the representation of the Allen elasticity of substitution  obtained by Uzawa for linear homogeneous functions holds true for nonhomogeneous functions. It is shown that the criticism of the  Allen-Uzawa elasticity of substitution in the works of Blackorby, Primont, Russell is based on an incorrect example.

\bigskip

\textbf{Key words: }elasticity of substitution, nonhomogeneous production functions.

\textbf{2010 Mathematics Subject Classification: }91B02, 91B38.

\section{Introduction}
In the book by \citet{Allen} the concept of the two-variable elasticity of substitution (ES)
proposed by \citet{Hicks}  has been extended to an arbitrary number of factors. Later, (the books by Allen and Hicks were first published in 1938 and 1932 respectively) \citet{Uzawa} noticed that for homogeneous functions of degree $k=1$, known as linear homogeneous functions, the value of Allen elasticity of substitution (AES) can be expressed in terms of the cost function and its derivatives. The vector of the input factors  $x\in \mathbb{R}^{n}_{++}$ from Allen's definition
needs to be linked to the price vector  $p\in \mathbb{R}^{n}_{++}$ from Uzawa's form of AES in such a way that the minimum cost of input factors for the given level of output $y>0$ is attained at the given point $x$.

In his paper Uzawa solves the problem of describing the class of all linear homogeneous functions with constant AES. Representation of AES for such functions through the cost function is an auxiliary result of this article, achieved by means of certain relationships established by Allen for linear homogeneous functions. Specifically, on page 292 of his paper Uzawa references page 598 of Allen's book containing the equality $F_0 = 0$, which is true for linear homogeneous functions, but does not hold even for homogeneous functions of degree $k=2$.

We  are not aware of any publication where Uzawa's form for the AES is provided with proof for nonhomogeneous functions. The main purpose of this article is to present such proof.

Section~\ref{Prelim} of this paper contains key definitions, as well as the concepts from mathematical analysis used in later  sections. Some lemmas and proofs from linear algebra have been relegated to appendices.

In section~\ref{AESinUzawa}, we formulate and prove the main result of the paper on Uzawa's form of AES.

In section~\ref{Morish}, it is shown that in examples from  \citet{Blackorby1989882, Blackorby2007203} the  production function for all points of $\mathbb{R}^{3}_{++}$
either has a singular bordered Hessian or has no partial derivatives. Thus, for such a function AES is not defined at all. Interestingly, the cost function is nevertheless defined for all $p \in \mathbb{R}^{3}_{++}$ and belongs to the class $C^{\infty} $.

The final section~\ref{Discus} references some publications that discuss  variations of ES which differ from AES.

\section{Preliminaries}\label{Prelim}
\subsection{The definition of the Allen elasticity of substitution (AES)}

If  $f\colon X \subset \mathbb{R}^n \to \mathbf{R}$ is a function of class $C^2$ in a neighborhood of some point $x$, then we denote by $f_i(x)$ and $f_{ij}(x)$  the first derivatives with respect to $x_i$, $i = 1,\dots,n$, and second order derivatives for the pairs of variables $x_i,x_j$, $i,j =1,\dots,n$, at that point. As usual, the gradient of $f$ at the point $x$ is a vector $\nabla f(x)$ of all first-order partial derivatives $f_i(x)$, the Hessian matrix $H_f(x)$ of $f$ at the point $x$ consists of elements $f_{ij}(x)$. The bordered Hessian $f$ at the point $x$ is the matrix of size $(n + 1)\times (n + 1)$ of the following form
\begin{equation}\label{bordH}
  \left(
  \begin{array}{ccccc}
    0 & f_1 & f_2 & \dots & f_n \\
    f_1 & f_{11} & f_{12} & \dots & f_{1n} \\
    \multicolumn{5}{c}{\dotfill} \\
    f_n & f_{n1} & f_{n2} & \dots & f_{nn}
      \end{array}
  \right)
  =
   \left(
  \begin{array}{c|c}
    0 & \nabla f(x) \\ \hline
    [\nabla f(x)]^T \vphantom{{T^T}^T}& H_f(x) \\
  \end{array}
  \right)
  .
\end{equation}
The determinant of (\ref{bordH}) is denoted by $F$; the cofactor of any element $f_{ij}$ in this matrix will be denoted by $F_ {ij}$. Allen assumes the positivity of all the first partial derivatives and the alternation of the signs of the leading principal minors
$\Delta_2, \dots, \Delta_{n + 1}$ of the bordered Hessian matrix, which is equivalent to requiring the strong concavity of $f$ (see. \Citet[p. 414] {Diewert1981397}). For our purposes it is sufficient to have a single requirement that $\Delta_{n + 1} =F$ is nonzero.

Following \citet[p. 504]{Allen}, the partial elasticity of substitution of factors $i$ and $j$ (or simply AES of factors $i$ and $j$) is the number
\begin{equation}\label{defAES}
\sigma^A_{ij}(x)=\frac{x_1f_1+\dots+x_nf_n}{x_ix_j}\frac{F_{ij}}{F}.
\end{equation}
 The bordered Hessian matrix is symmetric for functions of class $C^2$. Therefore, $\sigma^ A_{ij} =\sigma^ A_{ji}$ for all $i,j = 1,\dots, n$.

\subsection{The elasticity of substitution by Hicks and AES in the case of two variables}

In the case of two variables, the definition of (\ref{defAES}) is consistent with the definition of the elasticity of substitution  introduced by Hicks (HES). Indeed, in this case $\sigma^A_{12}(x_1,x_2)$ equals
\begin{equation}\label{defHES}
  \frac{x_1f_1+x_2f_2}{x_1x_2}\frac{-\left|
                                                \begin{array}{cc}
                                                  0 & f_1 \\
                                                  f_2 & f_{21} \\
                                                \end{array}
                                              \right|
  }{\left|
      \begin{array}{ccc}
        0 & f_1 & f_2 \\
        f_1 & f_{11} & f_{12} \\
        f_2 & f_{21} & f_{22} \\
      \end{array}
    \right|
  }
  = \frac{x_1f_1+x_2f_2}{x_1x_2}
  \frac{f_1f_2}{-f_{11}f_2^2+2f_{12}f_1f_2-f_{22}f_1^2},
\end{equation}
that matches the formula for the HES in \citet [p. 342]{Allen} up to notation.

In \citet{Blackorby1981147, Blackorby1989882, Blackorby2007203} one of the advantages of the Morishima elasticity of substitution (MES) is considered to be the possibility to represent MES as the so-called logarithmic derivative.
\begin{equation}\label{defMES}
  \sigma^M_{ij}=\frac{\partial\ln(C_i/C_j)}{\partial\ln(p_j/p_i)}.
\end{equation}
Preceding the discussion of this definition in section~\ref{Morish}, let us point to the possibility of such representation of $\sigma^A$ in the case of two factors.

At the level curve $f (x_1, x_2) = y$, where $y$ is a predetermined number, provided $f_1\neq 0$ it is possible to express variable $x_1$ as the function of class $C^2$ of variable $x_2$ per the implicit function theorem. While the derivative
$\frac{\partial x_1}{\partial x_2}= - \frac {f_2}{f_1}$
can be treated as the rate of substitution
of variable $x_1$ for variable $x_2$
consider the logarithmic derivative, which Hicks calls the elasticity of substitution
\begin{equation}\label{defHESln}
  \sigma^H(x_1,x_2)=\frac{\partial\ln(x_1/x_2)}{\partial\ln(f_2/f_1)}.
\end{equation}
The formula (\ref{defHES}) can be obtained from the formula (\ref {defHESln}) using the implicit function theorem for the following system of equations
\begin{equation*}
  \left\{
     \begin{array}{l}
       \ln(x_1/x_2)-u=0, \\
       \ln(f_2/f_1)-v=0,\\
       f(x_1,x_2)-y=0.
     \end{array}
   \right.
\end{equation*}
In vector notation, this is the equation of the form $\Phi(u, v, x_1, x_2)=0$. Columns of the Jacobi matrix of function $\Phi$
\begin{equation*}
  \left(
     \begin{array}{cccc}
       -1&0&\frac{1}{x_1}&-\frac{1}{x_2} \\
       0&-1&\frac{f_{21}}{f_2}-\frac{f_{11}}{f_1}&\frac{f_{22}}{f_2}-\frac{f_{12}}{f_1}\\
       0&0&f_1&f_2
     \end{array}
   \right),
\end{equation*}
corresponding to the partial derivatives with respect to $u, x_1, x_2$,
form the matrix of non-zero minor of order $3$. It is immediately verified that the
\begin{equation*}
  \left|
     \begin{array}{ccc}
       -1&\frac{1}{x_1}&-\frac{1}{x_2} \\
       0&\frac{f_{21}}{f_2}-\frac{f_{11}}{f_1}&\frac{f_{22}}{f_2}-\frac{f_{12}}{f_1}\\
       0&f_1&f_2
     \end{array}
   \right|
=-\frac{F}{f_1f_2},
\end{equation*}
where $F$ is the determinant of the bordered Hessian.

Here and later in this paper, we need to apply the implicit function theorem%
\footnote{See. \Citet[p. 490]{Zorich1}} for equations of the form $f(x,y)=0$, where  $f\colon M\subset \mathbf{R}^n\times \mathbf{R}^m\to \mathbf{R}^m$, $f\in C^k, k\ge 1$.
In such cases it is convenient to represent Jacobi matrix of $f$ as composed of two blocks $D_x f$ and $D_y f$, where $D_x f$ consists of the first $n$ columns of the Jacobi matrix and $D_y f$ consists of the last $m$ columns.
\begin{theorem}[implicit function theorem]
If a function $f:M\subset \mathbb{R}^n\times\mathbb{R}^m\to \mathbb{R}^m$ is of class  $C^k(M)$,
$k\ge1$,
$f(x_0,y_0)=0$,
and $D_y f(x_0, y_0)$ is an invertible matrix,
then there exist open sets
$V\subset\mathbb{R}^n$,
$W\subset\mathbb{R}^m$
and the mapping $\varphi:V\to W$ of class $C^k$
such that
$x_0\in V$, $y_0\in W$
and for all  $x\in V$, $y\in W$  the condition $f(x, y)=0$ is
equivalent to $y=\varphi(x)$,
i.e $f(x,\varphi(x))=0$ for all $x\in V$, and  $y=\varphi(x)$  is a unique solution of
the equation $f(x,y)=0$ with respect to $y$ for a given $x$. Moreover,
$$
D_x \varphi(x)=-[D_y f(x,y)]^{-1}D_x f(x,y) \text{  provided  } y=\varphi(x).
$$
\end{theorem}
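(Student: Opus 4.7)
The plan is to reduce the statement to the inverse function theorem. Introduce the auxiliary map $F\colon M\to\mathbb{R}^n\times\mathbb{R}^m$ defined by $F(x,y)=(x,f(x,y))$. This map is of class $C^k$, and its Jacobian at $(x_0,y_0)$ has the block form
\begin{equation*}
DF(x_0,y_0)=\begin{pmatrix} I_n & 0 \\ D_x f(x_0,y_0) & D_y f(x_0,y_0) \end{pmatrix},
\end{equation*}
whose determinant equals $\det D_y f(x_0,y_0)\neq 0$. Hence $F$ is a local $C^k$ diffeomorphism between some open neighborhood $U$ of $(x_0,y_0)$ and a neighborhood of $F(x_0,y_0)=(x_0,0)$.

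Because $F$ preserves the first $n$ coordinates, so must its local inverse; the latter therefore takes the form $F^{-1}(x,z)=(x,g(x,z))$ for some $C^k$ map $g$. After shrinking to obtain product neighborhoods $V\subset\mathbb{R}^n$ and $W\subset\mathbb{R}^m$ with $V\times W\subset U$, define $\varphi(x):=g(x,0)$. Then $F(x,\varphi(x))=(x,0)$ gives $f(x,\varphi(x))=0$ for all $x\in V$, and $\varphi(x_0)=y_0$. Uniqueness within $V\times W$ follows from the injectivity of $F$: any $(x,y)\in V\times W$ with $f(x,y)=0$ satisfies $F(x,y)=(x,0)=F(x,\varphi(x))$, so $y=\varphi(x)$.

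The derivative formula is then obtained by differentiating the identity $f(x,\varphi(x))=0$ via the chain rule, which gives $D_x f(x,\varphi(x))+D_y f(x,\varphi(x))\,D_x \varphi(x)=0$; since $D_y f(x_0,y_0)$ is invertible and $D_y f$ is continuous, $D_y f(x,\varphi(x))$ remains invertible in a small neighborhood of $x_0$, yielding $D_x\varphi(x)=-[D_y f(x,\varphi(x))]^{-1}D_x f(x,\varphi(x))$.

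The main obstacle is the inverse function theorem itself, whose standard proof rests on a Banach contraction argument. An alternative route that avoids invoking it as a black box is to apply the contraction argument directly to the family of maps $T_x(y)=y-[D_y f(x_0,y_0)]^{-1}f(x,y)$ parametrized by $x$: one checks that for $x$ close enough to $x_0$ the map $T_x$ sends a small closed ball around $y_0$ into itself and is a uniform contraction, so that the Banach fixed-point theorem produces $\varphi(x)$ as the unique fixed point. Continuity of $\varphi$ is automatic from the uniform contraction constant, and $C^k$-smoothness is recovered by bootstrapping from the chain-rule identity above.
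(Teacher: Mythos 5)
The paper itself offers no proof of this statement: it is quoted as a background result with a citation to Zorich's textbook, so there is no internal argument to compare yours against. Your proof is correct and is the standard reduction to the inverse function theorem via the auxiliary map $F(x,y)=(x,f(x,y))$; the block-triangular Jacobian computation, the observation that the local inverse must have the form $(x,z)\mapsto(x,g(x,z))$, the uniqueness argument from injectivity of $F$, and the chain-rule derivation of $D_x\varphi$ are all sound. Two small points deserve attention. First, after choosing the product neighborhood $V\times W\subset U$ you still need $\varphi(V)\subset W$, which requires one further shrinking of $V$ using continuity of $x\mapsto g(x,0)$ at $x_0$; without this the claimed equivalence ``$f(x,y)=0 \Leftrightarrow y=\varphi(x)$ for $(x,y)\in V\times W$'' could fail to produce any solution for some $x\in V$. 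Second, in your self-contained variant via the parametrized contraction $T_x(y)=y-[D_yf(x_0,y_0)]^{-1}f(x,y)$, the passage from continuity of $\varphi$ to differentiability is the one step that is genuinely nontrivial and is only asserted (``bootstrapping''); one must first prove differentiability directly from the definition before the chain-rule identity can be used to upgrade to $C^k$. For reference, Zorich's own route is the opposite of yours: he proves the implicit function theorem first (by induction on $m$ from the scalar case) and derives the inverse function theorem as a corollary; your reduction is the more common modern arrangement and is equally valid.
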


Therefore, in the equation $\Phi(u,v,x_1,x_2)=0$ the vector of variables $(u,x_1,x_2)$ can be expressed through the variable $v$ and
\begin{equation*}
\left(
  \begin{array}{c}
    D_v u \\
    D_v x_1 \\
    D_v x_2 \\
  \end{array}
\right)
=-
  \left[
     \begin{array}{ccc}
       -1&\frac{1}{x_1}&-\frac{1}{x_2} \\
       0&\frac{f_{21}}{f_2}-\frac{f_{11}}{f_1}&\frac{f_{22}}{f_2}-\frac{f_{12}}{f_1}\\
       0&f_1&f_2
     \end{array}
   \right]^{-1}
\left(
  \begin{array}{r}
    0\\
    -1\\
    0\\
  \end{array}
\right)
.
\end{equation*}
Hence, the expression for $D_v u$ obtained here matches (\ref{defHES}).

\subsection{Homogeneous functions}
The function $f\colon X\subset \mathbf{R}^n\to \mathbf{R}$ is called homogeneous of degree $k\in \mathbf{R}$
if for all numbers  $t>0$ and any $x\in X$   $tx\in X$ and $f(tx)=t^k f(x)$. If the equality $f(tx)=t^k f(x)$ holds for all numbers $t$ in some neighborhood of $1$ in some neighborhood of $x$, then the function is called locally homogeneous of degree $k\in \mathbf{R}$ function in the neighborhood.

For homogeneous and locally homogeneous degree $k$ functions of class $C^1$ the Euler identity is true
$$
x_1f_1(x)+\dots+x_nf_n(x)=kf(x).
$$
The partial derivatives of these functions are homogeneous or locally homogeneous functions of degree $(k-1)$.

Homogeneous degree $k = 1$ functions are often called linear homogeneous functions. Euler's identity for the first derivatives of linear homogeneous functions of two variables  $x_1f_{11}+x_2f_{12}=0$, $x_1f_{21}+x_2f_{22}=0$ allows to convert the formula (\ref{defHES}) to the form
\begin{equation}\label{homHES}
 \sigma^H(x_1,x_2)=\frac{f_1(x_1,x_2) f_2(x_1,x_2)}{f(x_1,x_2) f_{12}(x_1,x_2)}.
\end{equation}
\citet[p. 343]{Allen} notes that it is in the form (\ref{homHES}) that the elasticity of substitution was first defined by
\citet[pp. 117,245]{Hicks}.

\subsection{The critical points of the Lagrange function in the problem of minimizing the cost of a given level of output}
The function
\begin{equation}\label{LagC}
  L(x,\lambda)=\sum\limits_{i=1}^n p_ix_i+\lambda(f(x)-y)
\end{equation}
is the Lagrange function of the problem of minimizing $\sum\limits_{i=1}^n p_ix_i$ subject to $f (x)-y=0$ where the number $y$ and vector $p\in \mathbf{R}^n_{++}$ are given parameters. Critical points of the Lagrange function (\ref{LagC}) are defined as the solutions $(x, \lambda)$ of equations
$$
\left\{
  \begin{array}{c}
    D_x L=0, \\
    D_\lambda L=0 \\
  \end{array}
\right.
\quad
\text{or}
\quad
\left\{
  \begin{array}{l}
    p_i+\lambda f_i(x)=0,\; i=1,\dots,n, \\
    f(x)-y=0. \\
  \end{array}
\right.
$$

For any point $\bar{x}\in \mathbf{R}^n_{++}$ at $p_i=f_i(\bar{x})$ and $y=f(\bar{x})$ the point$(\bar{x},\bar{\lambda})$, where $\bar{\lambda}=-1$, is a critical point of the Lagrange function (\ref {LagC}).

\begin{theorem}[on the dependence of the critical points on the parameters]\label{zavis}
Let $(\bar{x},\bar{\lambda})\in \mathbf{R}^n_{++}\times \mathbf{R}$ be a critical point of the Lagrange function \textup{(\ref{LagC})} for some vector $p=\bar{p}\in \mathbf{R}^n_{++}$ and the number $y=\bar{y}$, the function $f\colon X\subset \mathbf{R}^n\to \mathbf{R}$ is in the class $C^2$ in a neighborhood of $\bar{x}$ and the determinant of the bordered Hessian at this point is not zero.
Then there exist a neighborhood $V$ of the point $(\bar{p},\bar{y})$, a neighborhood $W$ of the point  $(\bar{\lambda},\bar{x})\in  \mathbf{R}\times \mathbf{R}^n_{++}$ and a bijection $\varphi\colon V\to W$ of class $C^1$ such that for all $(p,y)\in V$ the point
$(\lambda,x)=\varphi(p,y)$ is the only critical point of $L$ in $W$. Moreover, the Jacobi matrix of the mapping $\varphi$ satisfies the condition
\begin{equation}\label{diffKrit}
\left(
  \begin{array}{c|c}
    D_p \lambda & D_y \lambda\\  \hline
   D_p x& D_y x \\
  \end{array}
  \right)
=
-
\left(
  \begin{array}{c|c}
   0 & \nabla f\\  \hline
   [\nabla f]^T & \lambda H_f\\
  \end{array}
  \right)^{-1}
\left(
  \begin{array}{c|c}
   0 & -1\\  \hline
   E & 0\\
  \end{array}
  \right).
\end{equation}
\end{theorem}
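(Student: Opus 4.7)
The plan is to rewrite the critical-point equations as a single vector system $F(p,y,\lambda,x)=0$, apply the implicit function theorem of the preceding subsection with $(p,y)\in\mathbb{R}^{n+1}$ as independent variables and $(\lambda,x)\in\mathbb{R}^{n+1}$ as dependent variables, and then read off (\ref{diffKrit}) from the derivative formula that theorem supplies. Ordering the equations so that the constraint $f(x)-y=0$ comes first and the $n$ stationarity conditions $p_i+\lambda f_i(x)=0$ come second, with the dependent variables listed as $(\lambda,x_1,\ldots,x_n)$, makes the two Jacobian blocks coincide exactly with the factors appearing in (\ref{diffKrit}). Since $f\in C^2$ near $\bar x$, the system $F$ is of class $C^1$, so the only hypothesis of the implicit function theorem still needing verification is the invertibility of $D_{(\lambda,x)}F$ at $(\bar p,\bar y,\bar\lambda,\bar x)$.

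Differentiating term by term, the constraint equation contributes the first row $(0,f_1,\ldots,f_n)$ and the $i$-th stationarity equation contributes the row $(f_i,\lambda f_{i1},\ldots,\lambda f_{in})$, so that
\[
D_{(\lambda,x)}F = \left(\begin{array}{c|c} 0 & \nabla f \\ \hline [\nabla f]^T & \lambda H_f \end{array}\right),
\]
while the block $D_{(p,y)}F$ is computed directly and coincides with the right-hand factor of (\ref{diffKrit}). To check invertibility, I would first observe that $\bar\lambda\neq 0$: from $\bar p_i=-\bar\lambda f_i(\bar x)$ and $\bar p\in\mathbb{R}^n_{++}$, neither $\bar\lambda$ nor any $f_i(\bar x)$ can vanish. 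A short row-and-column scaling argument --- dividing rows $2,\ldots,n+1$ by $\lambda$ and then multiplying column $1$ by $\lambda$ --- converts $D_{(\lambda,x)}F$ into the bordered Hessian (\ref{bordH}), yielding the identity $\det D_{(\lambda,x)}F = \lambda^{n-1}F$, which is nonzero by the assumption $F\neq 0$.

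The implicit function theorem then supplies neighborhoods $V$ and $W$ and a $C^1$ map $\varphi\colon V\to W$ such that for every $(p,y)\in V$ the point $\varphi(p,y)$ is the unique solution in $W$ of $F(p,y,\cdot,\cdot)=0$, together with the identity $D\varphi=-[D_{(\lambda,x)}F]^{-1}D_{(p,y)}F$, which is exactly (\ref{diffKrit}). For bijectivity of $\varphi$ I would invoke the explicit map $\psi\colon(\lambda,x)\mapsto(-\lambda\nabla f(x),f(x))$: the critical-point condition is precisely $(p,y)=\psi(\lambda,x)$, so $\psi\circ\varphi=\mathrm{id}_V$ is immediate from the construction of $\varphi$, and after shrinking $W$ so that $\psi(W)\subset V$, the identity $\varphi\circ\psi=\mathrm{id}_W$ follows from the uniqueness part of the theorem. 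The main obstacle is the block-determinant computation $\det D_{(\lambda,x)}F = \lambda^{n-1}F$: it is elementary but has to be carried out with care so as to preserve the bordered structure of the matrix.
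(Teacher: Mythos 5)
Your proposal is correct and follows the same route as the paper: both proofs apply the implicit function theorem to the critical-point system $\bigl(f(x)-y,\;p+\lambda\nabla f(x)\bigr)=0$ with $(\lambda,x)$ as the dependent block, identify $D_{(\lambda,x)}\Phi$ with the matrix $Q=\left(\begin{smallmatrix}0&\nabla f\\ [\nabla f]^T&\lambda H_f\end{smallmatrix}\right)$, and reduce invertibility to the identity $\det Q=\lambda^{n-1}F$ together with $\lambda\neq0$ (forced by $p_i=-\lambda f_i>0$). You differ only in two sub-steps. First, you establish $\det Q=\lambda^{n-1}F$ by scaling rows $2,\dots,n+1$ by $\lambda^{-1}$ and column $1$ by $\lambda$, whereas the paper proves the same identity as a separate lemma (Lemma~\ref{LbordA}) via a double cofactor expansion; your scaling argument is shorter, while the paper's lemma is reused later to compute the cofactors of $Q$ in the proof of Corollary~\ref{corxpop}, so it earns its keep there. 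Second, for bijectivity the paper simply appeals to the invertibility of the Jacobi matrix in (\ref{diffKrit}) (i.e.\ the inverse function theorem), whereas you exhibit the explicit inverse $\psi(\lambda,x)=(-\lambda\nabla f(x),f(x))$ and verify $\psi\circ\varphi=\mathrm{id}$ and $\varphi\circ\psi=\mathrm{id}$ after shrinking; your version is more concrete and, if anything, more careful than the paper's one-line remark, though you should note explicitly that $V$ must be re-chosen as $\varphi^{-1}(W)=\psi(W)$ after $W$ is shrunk so that domain and codomain match.
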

\begin{corollary}\label{corxpop}
Under the conditions of the theorem~\textup{(\ref{zavis}\textup)} for all $i,j=1,\dots,n$
\begin{equation}\label{xpop}
  \frac{\partial x_i}{\partial p_j}=-\frac{F_{ij}}{\lambda F}.
\end{equation}
\end{corollary}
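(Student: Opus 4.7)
The plan is to read off $\partial x_i/\partial p_j$ from the matrix identity (\ref{diffKrit}) of Theorem~\ref{zavis} and to express the resulting entry of an inverse in terms of $F$ and the $F_{ij}$ by extracting the appropriate powers of $\lambda$. Set
\[
M=\left(\begin{array}{c|c} 0 & \nabla f\\ \hline [\nabla f]^T & \lambda H_f \end{array}\right),\qquad N=\left(\begin{array}{c|c} 0 & -1\\ \hline E & 0 \end{array}\right),
\]
and write $M^{-1}$ in the corresponding $(1+n)\times(1+n)$ block form with scalar $\alpha$, vectors $\beta,\gamma\in\mathbb{R}^n$ and matrix $D\in\mathbb{R}^{n\times n}$. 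A routine block multiplication yields
\[
-M^{-1}N=\left(\begin{array}{c|c} -\beta^T & \alpha\\ \hline -D & \gamma \end{array}\right);
\]
equating the lower-left block with $D_p x$ in (\ref{diffKrit}) gives $D_p x=-D$, i.e.\ $\partial x_i/\partial p_j=-(M^{-1})_{i+1,\,j+1}$.

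Next I use the adjugate formula together with the symmetry of $M$ to write
\[
(M^{-1})_{i+1,\,j+1}=\frac{C_{j+1,\,i+1}(M)}{\det M}=\frac{C_{i+1,\,j+1}(M)}{\det M},
\]
where $C_{a,b}(M)$ denotes the $(a,b)$-cofactor. It remains to relate $\det M$ and $C_{i+1,\,j+1}(M)$ to $F$ and $F_{ij}$. Only the last $n$ columns of $M$ carry a factor $\lambda$; pulling it out of each gives $\det M=\lambda^n F$. For the cofactor, the associated $n\times n$ minor is obtained from $M$ by deleting row $i+1$ and column $j+1$; exactly $n-1$ of the surviving columns are still scaled by $\lambda$, and factoring them out produces $\lambda^{n-1}$ times the corresponding minor of the ordinary bordered Hessian, which after multiplication by the common cofactor sign $(-1)^{(i+1)+(j+1)}=(-1)^{i+j}$ yields $C_{i+1,\,j+1}(M)=\lambda^{n-1}F_{ij}$. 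Substituting,
\[
\frac{\partial x_i}{\partial p_j}=-\frac{\lambda^{n-1}F_{ij}}{\lambda^n F}=-\frac{F_{ij}}{\lambda F},
\]
which is the asserted identity (\ref{xpop}).

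The only delicate step, and the one I would treat with care, is the bookkeeping in the cofactor calculation: since the top row and the left-most column of $M$ do not carry the factor $\lambda$, one must verify that after deleting row $i+1$ and column $j+1$ exactly $n-1$ of the remaining columns are still scaled by $\lambda$, independently of which pair $(i,j)$ is chosen, and that the sign $(-1)^{(i+1)+(j+1)}$ matches the convention built into the definition of $F_{ij}$. Everything else reduces to a standard block-matrix computation.
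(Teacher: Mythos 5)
Your overall route --- reading $D_p x$ off the block identity (\ref{diffKrit}), then expressing $(M^{-1})_{i+1,j+1}$ through the adjugate formula and the symmetry of $M$ --- is exactly the paper's, so the structure is sound; the flaw is in the $\lambda$-bookkeeping that you yourself flagged as the delicate step. The claim $\det M=\lambda^{n}F$ is false: the last $n$ columns of $M$ are \emph{not} uniformly scaled by $\lambda$, because their top entries come from the border row $(0,\nabla f)$, which carries no $\lambda$, so you cannot "pull it out of each." Already for $n=1$ one has $\det\bigl(\begin{smallmatrix}0&f_1\\ f_1&\lambda f_{11}\end{smallmatrix}\bigr)=-f_1^2=F$, not $\lambda F$. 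The correct count is the content of the paper's Lemma~\ref{LbordA}: expanding along the border row and then the border column, every nonzero term of the determinant uses one entry of $\nabla f$, one entry of $[\nabla f]^T$, and an $(n-1)\times(n-1)$ minor of $\lambda H_f$, giving $\det M=\lambda^{n-1}F$. The same error infects your cofactor computation: after deleting row $i+1$ and column $j+1$ the surviving matrix is again a bordered matrix whose scaled block is only $(n-1)\times(n-1)$, so the cofactor equals $\lambda^{n-2}F_{ij}$, not $\lambda^{n-1}F_{ij}$ (no surviving column is scaled by $\lambda$ in its entirety, since the border row survives). Your two exponents are each too large by one, and the errors cancel in the quotient, which is why you land on the correct formula $-F_{ij}/(\lambda F)$; but as written both intermediate identities are wrong, and the "factor $\lambda$ out of the columns" argument establishes neither. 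Replacing both counts by the double-expansion argument of Lemma~\ref{LbordA}, applied to $M$ and to its bordered $n\times n$ submatrices --- which is precisely what the paper does --- repairs the proof.
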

For the proof of Theorem~(\ref {zavis}) and Corollary~(\ref{corxpop}) see \ref{AppII}.

\section{Uzawa's form of AES}\label{AESinUzawa}

\begin{theorem}[on the Uzawa form for the AES]
Assume the function  $f$ satisfies the conditions of theorem~\textup{(\ref{zavis})}, function
$
C(p,y)=\sum\limits_{i=1}^n p_i x_i(p,y)
$
is defined by relation $(\lambda,x)=\varphi(p,y)$ from the conclusion of the theorem~\textup{(\ref{zavis})}.
Then the function $C(p,y)$ is of  the class $C^2$, is linearly homogeneous in $p$ and
\begin{equation}\label{usavaES}
  \sigma^A_{ij}(x(p,y))=\frac{C(p,y)C_{ij}(p,y)}{C_i(p,y)C_j(p,y)}.
\end{equation}
\end{theorem}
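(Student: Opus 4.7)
The plan is to reduce everything to two ingredients: Shepard's lemma $C_i = x_i$, and the formula $\partial x_i/\partial p_j = -F_{ij}/(\lambda F)$ supplied by Corollary~\ref{corxpop}. With these in hand, the identity~(\ref{usavaES}) becomes an algebraic rewriting controlled by the first-order Lagrange condition $p_k = -\lambda f_k(x)$.

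First I would derive Shepard's lemma by differentiating $C(p,y) = \sum_i p_i x_i(p,y)$ with respect to $p_j$, getting $C_j = x_j + \sum_i p_i\, \partial x_i/\partial p_j$. Substituting $p_i = -\lambda f_i(x)$ from the first-order conditions converts the sum into $-\lambda \sum_i f_i(x)\, \partial x_i/\partial p_j$, and differentiating the constraint $f(x(p,y)) = y$ with respect to $p_j$ forces this last sum to vanish. Hence $C_j = x_j$. Since $x_j$ is of class $C^1$ in $(p,y)$ by Theorem~\ref{zavis}, it follows at once that $C \in C^2$ and $C_{ij} = \partial x_i/\partial p_j$.

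Next, linear homogeneity in $p$ is immediate from the observation that the system $p_i + \lambda f_i(x) = 0$, $f(x) = y$ is invariant under the scaling $(p,\lambda) \mapsto (tp, t\lambda)$ with $x$ and $y$ fixed. The uniqueness asserted in the implicit function theorem then forces $x(tp,y) = x(p,y)$ and $\lambda(tp,y) = t\lambda(p,y)$ for $t$ near $1$, so $C(tp,y) = t\, C(p,y)$ locally, which suffices for Euler's identity and for the formula as stated.

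Finally, for~(\ref{usavaES}), substituting $C_i = x_i$, $C_j = x_j$, and $C_{ij} = -F_{ij}/(\lambda F)$ into the right-hand side yields $-\frac{C}{\lambda} \cdot \frac{F_{ij}}{x_i x_j F}$, and using $C = \sum_k p_k x_k = -\lambda \sum_k x_k f_k(x)$ gives $-C/\lambda = \sum_k x_k f_k$, so the expression collapses to the definition~(\ref{defAES}) of $\sigma^A_{ij}$. There is no deep obstacle; the argument is a chain of short calculations, and the only tiny subtlety is recognizing that the $C^2$ regularity of $C$ is not immediate from the $C^1$ dependence in Theorem~\ref{zavis} but follows a posteriori from Shepard's lemma.
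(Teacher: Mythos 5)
Your proposal is correct and takes essentially the same route as the paper: local linear homogeneity of $C$ in $p$ from the uniqueness of critical points under the scaling $(p,\lambda)\mapsto(tp,t\lambda)$, Shephard's lemma $C_i=x_i$ giving $C\in C^2$ and $C_{ij}=\partial x_i/\partial p_j$, and then Corollary~\ref{corxpop} combined with the first-order condition $p_k=-\lambda f_k$ to match (\ref{usavaES}) with the definition (\ref{defAES}). The only cosmetic differences are that you prove Shephard's lemma by direct differentiation instead of invoking the envelope theorem as the paper does (in doing so you omit the equally short check that $D_yC=-\lambda$, which is needed to conclude that \emph{all} second derivatives of $C$, not just the $C_{ij}$ in $p$, exist and are continuous), and that you run the final algebraic chain from the cost-function side back to the definition rather than the other way around.
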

\begin{proof}
Because of the uniqueness of the critical points, if $(p, y)\in V$ and $(\lambda,x)=\varphi(p,y)$, then $(t\lambda,x)=\varphi(tp, y)$ for all $t$ close to unity. Therefore, $x$ is a locally homogeneous function of $p$ of degree  $0$, and the function
\begin{equation}\label{CostF}
C(p,y)=\sum\limits_{i=1}^n p_i x_i (p,y)
\end{equation}
is a locally homogeneous function of the first degree in $p$.

By applying the envelope theorem to the function
\begin{equation}\label{envelC}
C(p,y)=L(x,\lambda,p,y)=\sum\limits_{i=1}^n p_ix_i+\lambda(f(x)-y),
\end{equation}
where $(\lambda,x)=\varphi(p,y)$, and differentiating  identities~(\ref{envelC}) with respect to the variables $p$ and $y$ we obtain
\begin{align*}
D_p C & =D_p L \quad
\text{ or  }
C_i(p,y)=x_i(p,y)
\text{ for all $i=1,\dots,n$} \\
D_y C & =DL_y=-\lambda(p,y).
\end{align*}
Therefore, the function $C$ belongs to the class $C^2$ on the set of variables and
$C_{ij}=\frac{\partial x_i}{\partial p_j}$  for all $i,j=1,\dots,n$.

Expressing $f_i$ of the equation  $p_i+\lambda f_i=0$, expression (\ref{defAES}) can be transformed into
\begin{multline*}
\sigma^A_{ij}=\frac{x_1\left(-\frac{p_1}{\lambda}\right)+\dots+x_n\left(-\frac{p_n}{\lambda}\right)}{x_ix_j}\frac{F_{ij}}{F}=\\
=\frac{C}{x_ix_j}\left(-\frac{F_{ij}}{\lambda F}\right)
=\frac{C}{x_ix_j}\frac{\partial x_j}{\partial p_i}=\frac{C C_{ji}}{C_iC_j}=\frac{C C_{ij}}{C_iC_j}.
\end{multline*}
\end{proof}

\begin{corollary}[on the relation between HES for production function and HES for the cost function]
Under the conditions of Theorem~\textup{(\ref{zavis})} for $n = 2$ HES for production function $f$ and HES for the cost function $C$ provided $C_{12}\neq 0$ are related by
$$
\sigma^H_f=\frac{1}{\sigma^H_C}.
$$
\end{corollary}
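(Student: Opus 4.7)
The plan is to compose two applications of formulas already proved. Since $n=2$, the AES and HES coincide by the identity~(\ref{defHES}) versus~(\ref{defAES}) with $i=1,j=2$, so the just-established Uzawa formula~(\ref{usavaES}) immediately yields
\begin{equation*}
\sigma^H_f(x(p,y))\;=\;\sigma^A_{12}(x(p,y))\;=\;\frac{C(p,y)\,C_{12}(p,y)}{C_1(p,y)\,C_2(p,y)}.
\end{equation*}
This is the first half of the argument and uses nothing beyond the main theorem.

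The second step is to apply formula~(\ref{homHES}) to the cost function $C(\cdot,y)$ regarded as a function of the two price variables $(p_1,p_2)$ with $y$ held fixed. The main theorem records that $C$ is locally linearly homogeneous in $p$ and of class $C^2$. Local homogeneity is enough for Euler's identity for $C$ and for its first derivatives $C_1,C_2$ to hold pointwise, which is exactly what formula~(\ref{homHES}) requires. Hence
\begin{equation*}
\sigma^H_C(p)\;=\;\frac{C_1(p,y)\,C_2(p,y)}{C(p,y)\,C_{12}(p,y)}.
\end{equation*}

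Multiplying the two displayed identities gives $\sigma^H_f \cdot \sigma^H_C = 1$, which is the claim. Before concluding I would note that all denominators are nonzero at an interior point: $C_1 = x_1 > 0$ and $C_2 = x_2 > 0$ since $x\in\mathbb{R}^2_{++}$ (using $C_i = x_i$ from the envelope identity in the main theorem), $C$ itself is a positive sum of positive quantities, and $C_{12}\neq 0$ is the hypothesis of the corollary. The only conceptual point worth highlighting as a potential obstacle is making sure that \emph{local} homogeneity of $C$ in $p$ is sufficient to invoke~(\ref{homHES})\,---\,that formula was stated for homogeneous functions, but its derivation uses only $p_1 C_{11}+p_2 C_{12}=0$ and $p_1 C_{21}+p_2 C_{22}=0$, which follow from differentiating Euler's identity in a neighborhood of $p$, so the local version suffices. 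There is no heavy computation; the corollary is essentially the observation that passing from $f$ to $C$ via Uzawa's formula inverts HES in the two-factor case.
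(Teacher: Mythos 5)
Your proposal is correct and follows essentially the same route as the paper: identify $\sigma^H_f$ with $\sigma^A_{12}$ for $n=2$, apply the Uzawa formula~(\ref{usavaES}), and use the linear homogeneity of $C$ in $p$ together with~(\ref{homHES}) to get $\sigma^H_C=\frac{C_1C_2}{CC_{12}}$. Your added remarks on the nonvanishing of the denominators and on the sufficiency of local homogeneity are reasonable elaborations of points the paper leaves implicit.
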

\begin{proof}

Because of the homogeneity of $C$ on $(p_1, p_2)$ by the formula (\ref{homHES})
$$
\sigma^H_C=\frac{C_1 C_2}{C C_{12}}.
$$
Applying formula (\ref{usavaES}) to the above result proves Corollary 2.
\end{proof}

\section{The elasticity of substitution by Morishima and the criticism of AES}\label{Morish}
In \citet{Blackorby1981147,Blackorby1989882,Blackorby2007203}
Morishima elasticity of substitution%
\footnote {The article \citet{Morishima1967144} was never translated from Japanese, Blackorbi and Russell argue that they independently came to the MES in 1975}
(MES) is defined, similarly to Uzawa form of AES, using the cost function $C(p, y)$ according to formula (\ref{defMES}). By the envelope theorem $C_i=x_i$, $C_j=x_j$, and $p_j/p_i=f_j/f_i$ at critical points of the problem of minimizing the cost of a given level of output $y$ at factors' prices $p$. Therefore, it may seem that the logarithmic derivative (\ref{defHESln}) is calculated anew from the definition of HES.  However, this is not true in the case of three or more input factors, as $C(p, y)$ is assumed to be the extreme value of the problem with the full range of factors $x$ and  full range of prices $p$. The use of such a function $C(p,y)$ makes keeping relation $f(x) = y$ in calculating the $\sigma^M_{ij}$ unnecessary.

The logarithmic derivative (\ref{defMES}) allows for different interpretations, leading to different values of the derivative.

We apply the implicit function theorem to the system of equations
\begin{equation*}
  \left\{
     \begin{array}{l}
       \ln(C_i/C_j)-u=0, \\
       \ln(p_j/p_i)-v=0.\\
     \end{array}
   \right.
\end{equation*}
This is a vector equation of the form $\Phi(u,v,p,y)=0$, where Jacobi matrix of the mapping $\Phi$  contains the matrix
$
\left(
   \begin{array}{rr}
     -1 & 0 \\
     0 & -1 \\
   \end{array}
 \right)
$
in the first two columns,
while the remaining columns have zeros in the second row, except for the columns of the partial derivatives with respect to  the variables $p_i$ and $p_j$. To give meaning to differentiation of $u$ with respect to $v$, the matrix of non-zero minor in the implicit function theorem should consist of a pair of columns for the variables $u, p_i$  or the variables $u, p_j$.

In the first case the variables $u, p_i$ can be represented as functions of class $C^1$ on the other variables of this problem, and
\begin{equation}\label{diffMES1}
\left(
  \begin{array}{c}
    D_v u\\
   D_v p_i\\
  \end{array}
  \right)
=
-
\left(
  \begin{array}{c|c}
   -1 & \frac{C_{ii}}{C_i}-\frac{C_{ij}}{C_j}\\
   0& -\frac{1}{p_i}\\
  \end{array}
  \right)^{-1}
\left(
  \begin{array}{r}
   0 \\
   -1\\
  \end{array}
  \right).
\end{equation}
Hence
\begin{equation}\label{MESj}
 D_v u=p_i\left(\frac{C_{ij}}{C_j}-\frac{C_{ii}}{C_i}\right).
\end{equation}
In the second case,
\begin{equation}\label{MESi}
 D_v u=p_j\left(\frac{C_{ij}}{C_i}-\frac{C_{jj}}{C_j}\right).
\end{equation}
If there are only two inputs, using linear homogeneity of $C$ in $p$ and applying Euler's identity it is possible to transform (\ref{MESj}), (\ref{MESi}) into the Uzawa form of AES. For three or more factors the values of (\ref {MESj}) and (\ref {MESi}) are, generally speaking, different. In \citet{Blackorby1981147,Blackorby1989882,Blackorby2007203} Blackorbi, Primont and Russell believed $\sigma^M_{ij}$ to be equal to the value (\ref{MESj}). Thus, unlike AES, MES is not symmetric. MES also can not be considered as a true logarithmic derivative.
It is inaccurate to indicate the representation of MES as a logarithmic derivative as an advantage of MES over AES.

Here is an example of a true logarithmic derivative
\begin{equation*}
  \varepsilon_{ij}=\frac{\partial\ln x_i}{\partial\ln p_j}=\frac{\partial x_i}{\partial p_j}\frac{p_j}{x_i}=C_{ij}\frac{p_j}{C_i}.
\end{equation*}
Hence, MES can be represented as the difference
$\varepsilon_{ji}-\varepsilon_{ii}$
between two logarithmic derivatives with respect to two different variables.

The inference in  \citet[p. 882]{Blackorby1989882},  \citet[p. 203]{Blackorby2007203} that AES for three or more factors does not possess any of the essential properties of the original concept of HES is based on the examples of the production function
\begin{equation}\label{antiAESf}
f(x_1,x_2,x_3)=\min(x_1,x_2^{1/2}x_3^{1/2})
\end{equation}
in \citet{Blackorby1989882} and a similar function in \citet{Blackorby2007203}.
AES is calculated in Uzawa's form using the cost function $C(p_1,p_2,p_3,y)=y\left(p_1+2p_2^{1/2}p_3^{1/2}\right)$ corresponding to the function (\ref{antiAESf}).

The result is an example of a function $f$, for which Uzawa's form of AES is defined, but the original AES is not. The proposed function has a singular bordered Hessian in the majority of the points $x\in \mathbf{R}^3_{++}$, and does not have even the first partial derivatives in points with  $x_1^2=x_2x_3$. The minimum of $p_1x_1+p_2x_2+p_3x_3$ on the surface $f(x_1,x_2,x_3)=y$ is achieved at the points of the curve, defined by the system of equations
\begin{equation}\label{kriv}
  \left\{
     \begin{array}{l}
       x_1-y=0, \\
       x_2x_3-y^2=0.\\
     \end{array}
   \right.
\end{equation}
Bordered Hessian for this problem is nonsingular
$$
\left|
  \begin{array}{cc|ccc}
    0 & 0 & 1 & 0 & 0 \\
    0 & 0 & 0 & x_3 & x_2 \\\hline
    1 & 0 & 0 & 0 & 0 \\
    0 & x_3 & 0 & 0 & 1 \\
    0 & x_2 & 0 & 1 & 0 \\
  \end{array}
\right|=-2x_2x_3.
$$
Therefore, the minimum value function is subject to the envelope theorem. It is immediately verified that $C_1^2=C_2C_3$ for all $(p,y)$, i.e. the minimum is attained on the curve (\ref{kriv}).

Criticism of a deeply rooted belief%
\footnote{For example, Varian wrote on the page 13 of the textbook \citet{Varian} ``The elasticity of substitution measures the curvature of an isoquant'', and further ``This is a relatively natural measure of curvature''. It is stated in the book \citet{Intriligator} on page 182 that ``The elasticities of substitution characterize the curvature of the isoquant''.}
that AES is a measure of the curvature of the isoquant $f(x)=y$ can be found in \citet{DeLaGrandville199723}. It is demonstrated using simple examples of the two-variable Cobb-Douglas functions which, is well known to have the AES constant and equal to one along the isoquant, while the curvature can vary widely. On the other hand, any curve with a constant curvature (i.e., an arc of a circle) is not a level curve of a CES-function. This indicates that a parallel shift of the isoquant along one of the axes does not change the curvature, but changes the AES. Furthermore, AES is not dependent on the units for each factor, whereas the curvature is.

Since AES and MES match HES in a case of two variables, the aforementioned statement about the curvature of the isoquant is valid for MES.
In \citet{Blackorby1989882}  $\sigma^M_{12}=0$  and $\sigma^M_{21}=1/2$ are calculated for function (\ref{antiAESf}), but it is not stated how this relates to any curvature. Here the surface $f(x_1,x_2,x_3)=y$ is a union of the sets $\{(x_1,x_2,x_3)\in \mathbf{R}^3_{++}\colon x_1\le y,x_2x_3=y^2\}$ and $\{(x_1,x_2,x_3)\in \mathbf{R}^3_{++}\colon x_1= y,x_2x_3\ge y^2\}$ with common points on the curve (\ref{kriv}).

\section{Discussion}\label{Discus}
The concept of elasticity of substitution was introduced into economic theory in the early 1930s  (see \citet{Molina}).
Constant elasticity of substitution (CES) functions were developed and studied by renowned economists
\citet{Hicks}, \citet{Arrow}, \citet{McFadden63}.

The research of possibilities to extend  HES to the case of many factors at the expense of ``freezing'' the value of the output, all but two of the prices and production factors, resulted in the introduction of concepts of DES (Direct partial elasticity of substitution) and SES (Shadow partial elasticity of substitution) in \citet{McFadden63}. The description of the classes of functions with a constant DES and a constant  SES derived by McFadden revealed the need to find other measures of substitution, accomodating a wider class of production functions with constant ES.

One of the reasons for AES modification is associated with the transition from the problem of minimum cost to the problem of maximum profit. If $p$, as before, is the vector of input prices and $p_y$ is the output price then per
\citet{Bertoletti2005183},
the Hotelling-Lau elasticity (HLES) is defined as
\begin{equation*}
  \sigma^{HL}_{ij}(p,p_y)=-\frac{\pi\pi_{ij}}{\pi_i\pi_j}, \text{ where }
\pi(p,p_y)=\max_{x\in \mathbf{R}^n_{++}}\left\{p_y f(x)-\sum_{i=1}^n p_ix_i\right\}.
\end{equation*}
This measure does not require the constancy of production or any other parameter that depends on the number of factors or prices. Such ES is called gross ES (the opposite being net ES). Examples of a net ES are AES and MES; gross ES version of MES (MGES) is introduced in \citet{Davis1996173}
\begin{equation*}
  \sigma^{MG}_{ij}(p,p_y)=p_i\left(\frac{\pi_{ij}}{\pi_j}-\frac{\pi_{ii}}{\pi_i}\right).
\end{equation*}
In \citet{Blackorby2007203} there is a proof that MES at the point $(p,y)$ is identical for all the pairs of factors to MGES at the point $(p,p_y)$, where $y=f(x(p,p_y))$ is the profit-maximizing output, if and only if the production function is homothetic, i.e. it is a composition of linear homogeneous function with an increasing external function. The authors again do not impose requirements on the smoothness of $f$ and nondegeneracy of $H_f$. In the example (\ref{antiAESf}) the values of  $\pi(p,p_y)$ are defined only at $p_y=p_1+2\sqrt{p_2p_3}$ and are equal to zero.

In their review papers \citet{Stern2004,Stern2011,Mundra} mention more than ten variations of ES or the elasticity of complementarity (EC), and describe the results of estimates of some of them using the four factors data set. It turns out, for example, that for some pairs of factors, AES and MES have opposite signs: complementary factors according to AES are substitutes according to MES.
It is appropriate to quote from \citet{Stern2004} `There are many different legitimate definitions of the ES and the elasticity of complementarity (EC). None of these is the one true ES --- which one is useful depends on what we wish to measure. As their value and even sign can vary dramatically, the choice of the appropriate indicator is important'.

\begin{center}
\section*{Appendices}
\end{center}

\appendix
\renewcommand{\thesection}{Appendix \Roman{section}}
\section{}\label{AppI}
\begin{lemma}\label{LbordA}
If $A$ is a square matrix of order $n$, $B$ is a $1\times n$ matrix, $C$ is a  $n\times 1$ matrix, then for all $\lambda\in \mathbf{R}$
\begin{equation}\label{bordA}
  \det
\left(
  \begin{array}{c|c}
    0 & B \\  \hline
   C& \lambda A \\
  \end{array}
  \right)
  =
\lambda^{n-1}  \det
\left(
  \begin{array}{c|c}
    0 & B \\  \hline
   C& A \\
  \end{array}
  \right)
.
\end{equation}
\end{lemma}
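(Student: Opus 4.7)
The plan is to exploit the zero at position $(1,1)$ via a simple column-and-row scaling. First I would treat the generic case $\lambda\neq 0$. Multiplying column $1$ of the matrix on the left-hand side of (\ref{bordA}) by $\lambda$ scales its determinant by $\lambda$; because the $(1,1)$ entry is zero, the top row is unaltered, while in each of the remaining $n$ rows every entry now carries a factor of $\lambda$ (the first entry becomes $\lambda C_i$, and the others are $\lambda A_{ij}$). Factoring $\lambda$ out of each of these $n$ rows produces $\lambda^n$ times the determinant on the right-hand side of (\ref{bordA}). Dividing through by $\lambda$ gives the identity for $\lambda\neq 0$.

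The case $\lambda = 0$ would then follow by continuity: both sides of (\ref{bordA}) are polynomials in $\lambda$ (treating all other entries as constants) that agree on the dense set $\{\lambda\neq 0\}$, so they must agree everywhere. Alternatively, one can verify $\lambda=0$ directly: for $n\ge 2$ the bottom-right $n\times n$ block vanishes, so the matrix has rank at most $2 < n+1$, forcing both sides to equal zero and matching $0^{n-1}=0$; for $n=1$ the exponent is $0$ and the identity is tautological.

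I do not anticipate any real obstacle here---the zero at position $(1,1)$ is precisely what makes the scaling trick succeed, and the remainder is bookkeeping. A uniform alternative that avoids the case split is cofactor expansion along the first row: every minor with a nonzero cofactor has one column equal to $C$ and exactly $n-1$ columns drawn from $\lambda A$, so $\lambda^{n-1}$ pulls out of every term simultaneously and reassembles into the right-hand side of (\ref{bordA}).
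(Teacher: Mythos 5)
Your proposal is correct, and your primary argument takes a genuinely different route from the paper's. The paper expands the determinant along the first row and then expands each minor $\det(Q_{0j})$ along its left column, reducing to $(n-1)\times(n-1)$ submatrices $\lambda A_{ij}$ from which $\det(\lambda A_{ij})=\lambda^{n-1}\det(A_{ij})$ extracts the factor; this is a single uniform computation valid for all $\lambda$ with no case split. Your main argument instead multiplies the first column by $\lambda$ (harmless in the top row because of the zero at position $(1,1)$) and then factors $\lambda$ out of each of the bottom $n$ rows, giving $\lambda\det(Q)=\lambda^{n}\det(M)$; this is arguably slicker but forces you to divide by $\lambda$ and then handle $\lambda=0$ separately. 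Both of your patches for $\lambda=0$ are sound: the polynomial-identity argument is immediate, and the direct rank bound ($\mathrm{rank}\le 2<n+1$ for $n\ge 2$, with $n=1$ checked by hand) also works. The ``uniform alternative'' you sketch at the end --- expanding along the first row and pulling $\lambda^{n-1}$ out of the $n-1$ columns of $\lambda A$ surviving in each minor --- is essentially the paper's proof, differing only in that you factor $\lambda$ column-by-column from the minor where the paper performs a second Laplace expansion down the left column before factoring. In short: everything you wrote is correct, and your scaling argument trades the paper's uniformity for brevity at the cost of a (properly handled) degenerate case.
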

\begin{proof}
Assume $Q$ and $M$ are the matrices on the left side and on the right side of this equation respectively.
Let us enumerate the rows and the columns of these matrices with the numbers $0,1,\dots,n$.
Using the expansion of the determinant along the first row, we get
$$
\det(Q)=\sum\limits_{j=1}^n b_j(-1)^{1+j+1}\det(Q_{0j}).
$$
The expansion of  $\det(Q_{0j})$ along the left column gives
\begin{multline*}
\det(Q_{0j})=\sum\limits_{i=1}^n c_i(-1)^{i+1}\det(\lambda A_{ij})= \\
=\lambda^{n-1} \sum\limits_{i=1}^n c_i(-1)^{i+1}\det(A_{ij})=
\lambda^{n-1} \det(M_{0j}),
\text{ for all } j=1,\dots,n,
\end{multline*}
where $A_{ij}$ is the submatrix formed by deleting the $i$-th row and the $j$-th column of $A$. Hence $\det(Q)=\lambda^{n-1}\det(M)$.
\end{proof}

\begin{lemma}[on the multiplication of block matrices]\label{blpckM}
If the matrix $A$ consists of a left block $A^{(1)}$ of size $m\times n_1$ and the right block $A^{(2)}$ of size $m\times n_2$, and the matrix $B$ consists of an upper block $B^{(1)}$ of size $n_1\times p$ and the lower block $B^{(2)}$ of size $n_2\times p$, then $AB=A^{(1)}B^{(1)}+A^{(2)}B^{(2)}$.
\end{lemma}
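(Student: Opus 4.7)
The statement is the standard block-row/block-column product rule, so the plan is simply to unwind both sides entry-by-entry and observe that they agree. Note that $A$ has $n_1+n_2$ columns and $B$ has $n_1+n_2$ rows, so the product $AB$ is defined and has size $m\times p$; the two summands $A^{(1)}B^{(1)}$ and $A^{(2)}B^{(2)}$ are each of size $m\times p$ as well, so the equality makes sense.

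The first step is to fix indices $1\le i\le m$ and $1\le k\le p$, write $A=(a_{ij})$ and $B=(b_{jk})$, and expand
\[
(AB)_{ik}=\sum_{j=1}^{n_1+n_2} a_{ij}b_{jk}.
\]
The second step is to split this sum at $j=n_1$, identifying $a_{ij}$ with the $(i,j)$ entry of $A^{(1)}$ for $1\le j\le n_1$ and with the $(i,j-n_1)$ entry of $A^{(2)}$ for $n_1<j\le n_1+n_2$, and similarly for $b_{jk}$ in terms of $B^{(1)}$ and $B^{(2)}$. Re-indexing the second sum by $j'=j-n_1$ yields
\[
(AB)_{ik}=\sum_{j=1}^{n_1} A^{(1)}_{ij}B^{(1)}_{jk}+\sum_{j'=1}^{n_2} A^{(2)}_{ij'}B^{(2)}_{j'k}=(A^{(1)}B^{(1)})_{ik}+(A^{(2)}B^{(2)})_{ik}.
\]
Since $i,k$ were arbitrary, this proves the lemma.

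There is no real obstacle: the claim is a direct bookkeeping consequence of the definition of matrix multiplication. The only thing to be careful about is the index shift on the second block, which is why I would make the re-indexing explicit rather than leave it implicit.
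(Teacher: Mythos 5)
Your proof is correct and follows exactly the same route as the paper's: expand the $(i,k)$ entry of $AB$ by the definition of matrix multiplication, split the sum at $j=n_1$, and identify the two partial sums with the entries of $A^{(1)}B^{(1)}$ and $A^{(2)}B^{(2)}$. The only difference is that you make the re-indexing of the second block explicit, which the paper leaves implicit.
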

\begin{proof}
By the definition of matrix multiplication
$$
(AB)_{ij}=\sum\limits_{k=1}^{n_1} a_{ik}b_{kj}+\sum\limits_{k=n_1+1}^{n_1+n_2} a_{ik}b_{kj}=(A^{(1)}B^{(1)})_{ij}+(A^{(2)}B^{(2)})_{ij}.
$$
\end{proof}

\section{}\label{AppII}

\begin{proof}[The proof of the theorem~\textup{(\ref{zavis}\textup)}]
We denote the left sides of equations
$$
\left\{
  \begin{array}{c}
    D_\lambda L=0, \\
    D_x L=0 \\
  \end{array}
\right.
$$
through $\Phi_1(\lambda,x,p,y)$ and $\Phi_2(\lambda,x,p,y)$, and then apply the implicit function theorem to the equation
$$
\Phi(\lambda,x,p,y)=\left(
                      \begin{array}{c}
                       \Phi_1(\lambda,x,p,y) \\
                       \Phi_2(\lambda,x,p,y) \\
                      \end{array}
                    \right)
=
\left(
  \begin{array}{c}
    0 \\
    0 \\
  \end{array}
\right).
$$
For convenience, let us represent the Jacobi matrix of the mapping $\Phi$ as a block matrix
$$
\Phi'(\lambda,x,p,y)=
\left(
  \begin{array}{c|c|c|c}
    D_\lambda \Phi_1 & D_x \Phi_1& D_p \Phi_1& D_y \Phi_1\\  \hline %\\[-1.5ex]
   D_\lambda \Phi_2 & D_x \Phi_2& D_p \Phi_2& D_y \Phi_2\\
  \end{array}
  \right),
$$
where $D_\lambda \Phi_1=0$, $D_x \Phi_1=\nabla f$, $D_p \Phi_1=0$, $D_y \Phi_1=-1$,
 $D_\lambda \Phi_2=[\nabla f]^T$, $D_x \Phi_2=\lambda H_f$, $D_p \Phi_2=E$, $D_y \Phi_2=0$.

Thus, the first $(n+1)$  columns of the matrix $\Phi'$ make up  the matrix
$$
Q=\left(
  \begin{array}{c|c}
   0 & \nabla f\\  \hline
   [\nabla f]^T & \lambda H_f\\
  \end{array}
  \right),
$$
whose determinant, according to Lemma~\ref{LbordA}, is equal to $\lambda^{n-1}F\neq 0$ due to the condition  $p_i+\lambda f_i=0$ for all  $i=1,\dots,n$ and positivity all $p_i$. Now the conclusion of the theorem follows directly from the implicit function theorem. Bijectivity of the mapping $\varphi$ follows from the reversibility of the Jacobi matrix (\ref{diffKrit}).
\end{proof}

\begin{proof}[Proof of the corollary~\textup{\ref{corxpop}}]
Multiplying matrices in (\ref{diffKrit}), we obtain
$$
-Q^{-1}\left(
  \begin{array}{c|r}
   0 & -1\\  \hline
   E & 0\\
  \end{array}
  \right)
=
\left(
  \begin{array}{c|c}
   -Q^{-1}\left(
           \begin{array}{c}
             0 \\
             E \\
           \end{array}
         \right)&
 -Q^{-1}\left(
           \begin{array}{r}
             -1 \\
             0 \\
           \end{array}
         \right)
  \end{array}
  \right)
$$
Applying Lemma~\ref{LbordA} to compute the cofactors of the elements of the matrix $Q$ with indices $i,j=1,\dots,n$,
we get
$$
\left(Q^{-1}\right)_{ij}=\frac{\det(Q_{ji})}{\det(Q)}=\frac{\lambda^{n-2}F_{ji}}{\lambda^{n-1}F}=\frac{F_{ij}}{\lambda F}.
$$
Hence, by Lemma~\ref{blpckM} the $(i,j)$th element of the matrix $D_p x$ is equal to  $-\frac{F_{ij}}{\lambda F}$ for all $i,j=1,\dots,n$.
\end{proof}


\begin{thebibliography}{20}
\providecommand{\natexlab}[1]{#1}
\providecommand{\url}[1]{\texttt{#1}}
\expandafter\ifx\csname urlstyle\endcsname\relax
  \providecommand{\doi}[1]{doi: #1}\else
  \providecommand{\doi}{doi: \begingroup \urlstyle{rm}\Url}\fi

\bibitem[{Allen, R. G. D.}(1962)]{Allen}
{Allen, R. G. D.}
\newblock \emph{{Mathematical Analysis for Economists}}.
\newblock {London: MacMillan and Co., Ltd.}, 2rd edition, 1962.

\bibitem[{Arrow K. J., Chenery H. B., Minhas B. S. and Solow R.
  M.}(1961)]{Arrow}
{Arrow K. J., Chenery H. B., Minhas B. S. and Solow R. M.}
\newblock {Capital-Labor Substitution and Economic Efficiency}.
\newblock \emph{{The Review of Economics and Statistics}}, 43\penalty0
  (3):\penalty0 225--250, 1961.

\bibitem[{Bertoletti, P.}(2005)]{Bertoletti2005183}
{Bertoletti, P.}
\newblock {Elasticities of substitution and complementarity: A synthesis}.
\newblock \emph{{Journal of Productivity Analysis}}, 24\penalty0 (2):\penalty0
  183--196, 2005.

\bibitem[Blackorby and Russell(1981)]{Blackorby1981147}
Blackorby and R.R. Russell.
\newblock The morishima elasticity of substitution: symmetry, constancy,
  separability, and its relationship to the hicks and allen elasticities.
\newblock \emph{Review of Economic Studies}, 48:\penalty0 147--158, 1981.

\bibitem[Blackorby et~al.(2007)Blackorby, Primont, and
  Russell]{Blackorby2007203}
C.~Blackorby, D.~Primont, and R.R. Russell.
\newblock The morishima gross elasticity of substitution.
\newblock \emph{Journal of Productivity Analysis}, 28\penalty0 (3):\penalty0
  203--208, 2007.

\bibitem[{Blackorby, C. and Russell R. R.}(1989)]{Blackorby1989882}
{Blackorby, C. and Russell R. R.}
\newblock {Will the Real Elasticity of Substitution Please Stand Up? (A
  comparison of the Allen/Uzawa and Morishima Elasticities)}.
\newblock \emph{{American Economic Review}}, 79\penalty0 (4):\penalty0
  882--888, 1989.

\bibitem[Davis and Shumway(1996)]{Davis1996173}
G.C. Davis and C.R. Shumway.
\newblock To tell the truth about interpreting the morishima elasticity of
  substitution.
\newblock \emph{Canadian Journal of Agricultural Economics}, 44\penalty0
  (2):\penalty0 173--182, 1996.

\bibitem[De~La~Grandville(1997)]{DeLaGrandville199723}
O.~De~La~Grandville.
\newblock Curvature and the elasticity of substitution: Straightening it out.
\newblock \emph{Journal of Economics/ Zeitschrift fur Nationalokonomie},
  66\penalty0 (1):\penalty0 23--34, 1997.

\bibitem[Diewert et~al.(1981)Diewert, Avriel, and Zang]{Diewert1981397}
W.E. Diewert, M.~Avriel, and I.~Zang.
\newblock Nine kinds of quasiconcavity and concavity.
\newblock \emph{Journal of Economic Theory}, 25\penalty0 (3):\penalty0
  397--420, 1981.

\bibitem[{Hicks, John R.}(1963)]{Hicks}
{Hicks, John R.}
\newblock \emph{{The Theory of Wages}}.
\newblock {St MartinТs Press, NY}, 2rd edition, 1963.

\bibitem[Intriligator(1971)]{Intriligator}
M.~D. Intriligator.
\newblock \emph{Mathematical Optimization and Economic Theory}.
\newblock Enclewood Cliffs: Prentice Hall, Inc., 1971.

\bibitem[{McFadden D.}(1963)]{McFadden63}
{McFadden D.}
\newblock {Constant Elasticity of Substitution Production Functions}.
\newblock \emph{{Review of Economic Studies}}, 30\penalty0 (2):\penalty0
  73--83, 1963.

\bibitem[{Molina, M.G.}(2005)]{Molina}
{Molina, M.G.}
\newblock {Capital theory and the origins of the elasticity of substitution
  (1932-35)}.
\newblock \emph{{Cambridge Journal of Economics}}, 29\penalty0 (3):\penalty0
  423--437, 2005.

\bibitem[Morishima(1967)]{Morishima1967144}
M.~Morishima.
\newblock A few suggestions on the theory of elasticity.
\newblock \emph{Keizai Hyoron (Economic Review)}, 16:\penalty0 144--150, 1967.

\bibitem[{Mundra, K.}(2013)]{Mundra}
{Mundra, K.}
\newblock {Direct and dual elasticities of substitution under non-homogenous
  technology and nonparametric distribution}.
\newblock \emph{{Indian Growth and Development Review}}, 6\penalty0
  (2):\penalty0 260--288, 2013.

\bibitem[Stern(2004)]{Stern2004}
D.I. Stern.
\newblock Elasticities of substitution and complementarity.
\newblock \emph{Rensselaer working papers in economics}, \penalty0
  (0403):\penalty0 53, 2004.

\bibitem[Stern(2011)]{Stern2011}
D.I. Stern.
\newblock Elasticities of substitution and complementarity.
\newblock \emph{Journal of Productivity Analysis}, 36\penalty0 (1):\penalty0
  79--89, 2011.

\bibitem[{Uzawa, H.}(1962)]{Uzawa}
{Uzawa, H.}
\newblock {Production functions with constant elasticities of substitution}.
\newblock \emph{{Review of Economic Studies}}, 29\penalty0 (4):\penalty0
  291--299, 1962.

\bibitem[{Varian, Hal R.}(1992)]{Varian}
{Varian, Hal R.}
\newblock \emph{{Microeconomic Analysis}}.
\newblock {New York: W. W. Norton \& Company}, 3rd edition, 1992.

\bibitem[{Zorich, Vladimir A.}(2004)]{Zorich1}
{Zorich, Vladimir A.}
\newblock \emph{{Mathematical Analysis {I}}}.
\newblock {Universitext}. {Springer-Verlag, Berlin}, 2004.
\newblock {Translated from the 2002 fourth Russian edition by Roger Cooke}.

\end{thebibliography}
\end{document}